\documentclass[letterpaper, 10 pt, conference]{ieeeconf}  

\IEEEoverridecommandlockouts                              

\usepackage{amsmath, amssymb}

\usepackage{amsthm}

\usepackage{graphics} 
\usepackage{epsfig} 
\usepackage{mathptmx} 
\usepackage{times} 
\usepackage{algorithm}
\usepackage{algpseudocode}
\usepackage{booktabs}

\theoremstyle{plain}
\newtheorem{theorem}{Theorem}

\theoremstyle{definition}
\newtheorem{assumption}{Assumption}

\theoremstyle{remark}
\newtheorem{remark}{Remark}

\makeatletter
\renewenvironment{proof}[1][\proofname]{%
  \par\noindent\textit{#1. }\ignorespaces
}{%
  \hfill\qed\par
}
\makeatother

\title{\LARGE \bf
Safe Receding Horizon Mixed-Integer Differentiable Predictive Control for Degradation-Aware Battery Dispatch 
}

\author{
Eshagh Safarzadeh Ravajiri$^{1}$, 
J\'an Drgo\v na$^{2}$, 
and Benjamin F. Hobbs$^{1}$%
\thanks{%
$^{1}$Department of Environmental Health and Engineering, Johns Hopkins University. 
$^{2}$Department of Civil and Systems Engineering, Johns Hopkins University.%
}
}

\begin{document}

\maketitle
\thispagestyle{empty}
\pagestyle{empty}

\begin{abstract}
We present a safe receding-horizon mixed-integer differentiable
predictive control methodology for residential battery energy
storage dispatch that combines neural-network speed with recursive
feasibility guarantees.
Unlike open-loop learning-to-optimize methods, it incorporates
real-time state-of-charge feedback and sinusoidal time-of-day
conditioning, enabling closed-loop re-planning at every timestep
without re-solving a mixed-integer program.
A differentiable rainflow cycle-counting layer enables
self-supervised training of the mixed-integer policy on exact
degradation physics.
The controller is a hybrid closed-loop system pairing a neural
mode-selection and continuous-action policy with a
quadratic-programming safety filter that guarantees recursive
feasibility independent of network weights or mode optimality.
We establish mode-conditioned Lipschitz continuity and a conditional
regret decomposition into training-quality, mode-mismatch, and
forecast-error terms.
On a 7-day net-metering evaluation, the method attains a $6.9\%$ cost
gap versus the closed-loop mixed-integer MPC benchmark with a
$25\times$ speedup ($0.11$\,s vs $2.7$\,s per step), while average
regret rises by under $4\%$ across $0$--$30\%$ forecast noise. The
learned and benchmark modes agree at every step, so the bound reduces
to its training-quality and forecast-error terms, both empirically
validated.
\end{abstract}

\section{INTRODUCTION}
Battery energy storage systems (BESS) are central to residential and
utility-scale grid decarbonization, enabling arbitrage between
time-of-use price peaks and solar generation valleys.
Optimal dispatch requires solving a mixed-integer nonlinear program
(MINLP) at every control step. Charge/discharge mode selection is
discrete, and accurate degradation accounting requires the rainflow
cycle-counting algorithm~\cite{xu2018_liion_degradation_model}, which
is non-smooth and path-dependent.
Classical approaches linearize degradation into piecewise-linear
(PWL) segments and solve a MILP~\cite{xu2018_cycle_aging_market},
which is tractable but still re-solves the full MILP at every
timestep in closed loop.

To overcome this bottleneck, a growing body of work has turned to
learning-based approximate Model Predictive Control (MPC), employing
neural networks to imitate an expert optimal controller for large
inference-time speedups~\cite{chen2022learning, karg2020efficient,
hewing2020learning}, with recent extensions to mixed-integer settings
that map states directly to discrete or hybrid
actions~\cite{Cauligi2022, bertsimas2022online_mio, 11651352}.
However, these rely on supervised behavioral cloning and massive
offline datasets of optimal solutions. Differentiable Predictive
Control (DPC)~\cite{drgovna2024learning} instead departs from
imitation, differentiating the expected control objective through the
unrolled closed-loop dynamics to train policies end-to-end,
minimizing closed-loop cost and enforcing constraints without an
expert dataset.

Boldock\'{y} et al.~\cite{boldocky2025learning} extended DPC to
mixed-integer settings via differentiable rounding over integer and
continuous variables. Our prior work~\cite{Eshagh2026midpc} applied
mixed-integer DPC to BESS dispatch with a differentiable rainflow
layer, but modeled it as open-loop control, mapping a day-ahead
forecast to a fixed dispatch sequence with no state-of-charge (SOC)
feedback under process noise and forecast error.

This paper proposes a safe receding-horizon mixed-integer
differentiable predictive control (S-RH-MI-DPC) methodology for BESS
dispatch, enabling offline self-supervised training of a neural
policy via a differentiable rainflow layer, with feasible online
deployment through a convex QP safety filter under the
receding-horizon principle. Our contributions are:
\begin{enumerate}
    \item A real-time control framework for BESS dispatch that
    incorporates SOC feedback and sinusoidal time-of-day
    conditioning, extending our previous
    work~\cite{Eshagh2026midpc}.

    \item Formal proofs establishing the pointwise and recursive
    feasibility of the proposed S-RH-MI-DPC policy.

    \item A proof that the mode-conditioned policy, the neural
    network continuous-control action followed by the convex QP
    safety filter, maintains mode-conditioned Lipschitz continuity.

    \item A dynamic regret bound
    $|R_T| \leq L_J\bigl(\sqrt{T}\,\epsilon
      + \lVert\boldsymbol{\delta}^{\mathrm{mode}}\rVert_2\bigr)
      + b\sum_k e_k$, decomposing suboptimality into a
    training-quality term $\epsilon$, a discrete mode-mismatch term
    $\boldsymbol{\delta}^{\mathrm{mode}}$, and a cumulative
    forecast-error term $e_k$.
\end{enumerate}

\section{Problem Formulation}
\label{sec:problem}

\subsection{Dispatch Optimization Problem}

We consider a grid-connected BESS co-located with a rooftop solar
array under net metering, operating over a $T$-step horizon
($\Delta t =1$\,h, $T = 24$ per day).
Let $x_k \in \mathbb{R}$ denote the SOC at step $k$.
The optimization problem is
\begin{equation}
\begin{aligned}
  \min_{\mathbf{u}, \boldsymbol{\sigma}} \quad J = \sum_{k=0}^{T-1}
  &\bigl( p_k^{\mathrm{imp}} \pi_k^b
        - p_k^{\mathrm{exp}} \pi_k^s \bigr)\Delta t 
  + C_{\mathrm{cyc}}(x_{0:T})
\end{aligned}
\label{eq:opt_problem}
\end{equation}

\vspace{-1em}

\begin{subequations}
\label{eq:constraints}
\begin{align}
\text{subject to:}\quad
  & \sigma_k \in \Sigma :=
    \{\mathrm{charge},\,\mathrm{discharge},\,\mathrm{idle}\},
    && \forall k \label{eq:mode_ex} \\
  & u_{c,k} \leq P_{\max}\,\mathbf{1}[\sigma_k = \mathrm{charge}],
    && \forall k \label{eq:power_couple_c} \\
  & u_{d,k} \leq P_{\max}\,\mathbf{1}[\sigma_k = \mathrm{discharge}],
    && \forall k \label{eq:power_couple_d} \\
  & x_{k+1} = f_{\sigma_k}(x_k, u_k),
    && \forall k \label{eq:soc_dyn} \\
  & x_k \in \mathcal{X} := [x_{\min},\, x_{\max}],
    && \forall k \label{eq:soc_lim} \\
  & p_k^{\mathrm{imp}} - p_k^{\mathrm{exp}} =
    d_k - s_k + u_{c,k} - u_{d,k},
    && \forall k \label{eq:power_bal}
\end{align}
\end{subequations}

where $\pi_k^b, \pi_k^s$ are the time-of-use purchase and export
tariffs, $d_k, s_k$ the exogenous site demand and solar generation,
$\eta_c, \eta_d \in (0,1]$ the charge and discharge efficiencies, and
$P_{\max}, E_{\mathrm{cap}}$ the power rating and energy capacity.
The indicator $\mathbf{1}[\cdot]$ couples each mode to its admissible
power in \eqref{eq:power_couple_c}--\eqref{eq:power_couple_d}, and the
control vector is
$u_k := [u_{c,k},\, u_{d,k}]^\top \in \mathbb{R}^2$ with
$u_{c,k}, u_{d,k} \in [0, P_{\max}]$.
The net-metering balance \eqref{eq:power_bal} links the grid powers
$p_k^{\mathrm{imp}}, p_k^{\mathrm{exp}} \geq 0$ to net site load,
while $C_{\mathrm{cyc}}(x_{0:T})$ prices cycle aging via the rainflow
model of Section~\ref{sec:rainflow}. The mode-dependent dynamics are
\begin{equation}
  f_{\sigma}(x_k, u_k) = x_k +
  \begin{cases}
    \eta_c\, u_{c,k}\,\Delta t
      & \sigma = \mathrm{charge}, \\[3pt]
    -u_{d,k}/\eta_d\,\Delta t
      & \sigma = \mathrm{discharge}, \\[3pt]
    0 & \sigma = \mathrm{idle}.
  \end{cases}
  \label{eq:mode_dynamics}
\end{equation}

\subsection{Battery Degradation Model}
\label{sec:rainflow}
The degradation cost $C_{\mathrm{cyc}}$ is computed via the rainflow
cycle-counting algorithm~\cite{endo1968_rainflow} applied to
$x_{0:T}$, which extracts $N_{\mathrm{cyc}}$ half- and full-cycles,
each of depth $\delta_i$ and multiplicity $n_i \in \{0.5, 1\}$.
Each cycle is penalized by the stress
function~\cite{xu2018_cycle_aging_market}:
\begin{equation}
  \Phi(\delta) = a\,\delta^b,\qquad
  a = 5.24\times10^{-4},\quad b = 2.03,
  \label{eq:stress}
\end{equation}
for Li(NiMnCo)O$_2$ chemistry, giving
$C_{\mathrm{cyc}}(x_{0:T}) = \frac{R}{\eta_d}
  \sum_{i=1}^{N_{\mathrm{cyc}}} n_i\,\Phi(\delta_i)$,
where $R$ is the battery replacement cost.
Problem~\eqref{eq:opt_problem} is an MINLP due to the discrete modes
$\sigma_k$ and the nonlinear, nonsmooth, path-dependent cost
$C_{\mathrm{cyc}}$. The tractable
approximation~\cite{xu2018_cycle_aging_market} replaces
$C_{\mathrm{cyc}}$ with an $M$-segment piecewise-linear surrogate,
assigning segment $j$ the constant marginal cost
\begin{equation}
  C_j = \frac{R \cdot M}{\eta_d\,E_{\mathrm{cap}}}
        \Bigl[\Phi\!\Bigl(\tfrac{j}{M}\Bigr)
             -\Phi\!\Bigl(\tfrac{j-1}{M}\Bigr)\Bigr],
  \qquad j = 1,\dots,M,
  \label{eq:pwl_cost}
\end{equation}
converting~\eqref{eq:opt_problem} to a MILP.

 
\section{Methodology}
\label{sec:method}

The proposed S-RH-MI-DPC operates in two phases. Offline, the policy
$(\mu_\theta, \pi_\theta)$ is trained end-to-end by differentiating
the closed-loop objective through modes, power, SOC, and the rainflow
layer (Algorithm~\ref{alg:offline}). Online, it runs as a
receding-horizon controller (Algorithm~\ref{alg:online}): at each
hour it maps the true SOC $x_k$ and forecast window $\bar{\xi}_k$ to a
mode $\sigma_k$ and candidate $\hat{u}_k$ in one forward pass,
projects through a convex QP safety filter, and applies only the
first action. This decouples the cost of learning from the real-time
loop.

\begin{algorithm}[t]
\caption{S-RH-MI-DPC: Offline Training}
\label{alg:offline}
\begin{algorithmic}[1]
\Require Training data $\mathcal{D}$, epochs $E$, batch size $B$,
         degradation schedule $w_e$, temperature schedule $\tau$
\Ensure  Trained policy parameters $\theta^*$
\State Initialise $(\mu_\theta, \pi_\theta)$ with mode head
\For{epoch $= 1, \ldots, E$}
  \State Update $w_e$, $\tau$ via curriculum schedule
  \For{each mini-batch $\{s, d, \pi^b, \pi^s, x_0\}$ of size $B$}
    \State Sample $k_{\mathrm{start}} \sim \mathrm{Uniform}[0, T)$
    \State Construct $\phi_k$ from $(x_0, \bar{\xi})$
           via~\eqref{eq:input}
    \State Forward:
           $\hat{\boldsymbol{\sigma}} \leftarrow
           \mu_\theta(\phi_k)$ via Gumbel-Softmax,
           $\hat{\mathbf{u}} \leftarrow \pi_\theta(\phi_k)$
           via tanh scaling
    \State Unroll SOC: $x_{0:T} \leftarrow
           f_{\hat{\sigma}_k}(x_k, \hat{u}_k)$
           for $k = 0, \ldots, T-1$
    \State Compute loss:
           $\mathcal{L} = \mathcal{L}_{\mathrm{econ}} +
            w_e\,\mathcal{L}_{\mathrm{cyc}} +
            \lambda_{\mathrm{soc}}\,\mathcal{L}_{\mathrm{soc}}$
    \State Backward through differentiable rainflow layer
    \State Update $\theta$ via AdamW with gradient clipping
  \EndFor
\EndFor
\State \Return $\theta^* \leftarrow \theta$
\end{algorithmic}
\end{algorithm}

\begin{algorithm}[t]
\caption{S-RH-MI-DPC: Online Deployment}
\label{alg:online}
\begin{algorithmic}[1]
\Require Trained policy $(\mu_{\theta^*}, \pi_{\theta^*})$,
         initial SOC $x_0$, forecast oracle $\bar{\xi}_k$
\For{each timestep $k = 0, 1, 2, \ldots$}
  \State Observe true SOC $x_k$
  \State Query forecast window
         $\bar{\xi}_k \leftarrow \xi_{k:k+T-1}$
  \State Construct $\phi_k$ from $(x_k, \bar{\xi}_k)$
         via~\eqref{eq:input}
  \State \textbf{Mode selection:}
         $\sigma_k \leftarrow \mu_{\theta^*}(\phi_k)$
         via argmax
         \hfill $\triangleright$ $\sigma_k \in \Sigma$
  \State \textbf{Control candidate:}
         $\hat{u}_k \leftarrow \pi_{\theta^*}(\phi_k)$
         \hfill $\triangleright$ single forward pass, no online MILP
  \State \textbf{QP safety filter:}
         $u_k \leftarrow
         \Pi_{\mathcal{U}_{\sigma_k}(x_k)}(\hat{u}_k)$
         \hfill $\triangleright$ identity when $\hat{u}_k$ feasible
  \State \textbf{Apply first action:} execute $u_k$
  \State Advance state:
         $x_{k+1} \leftarrow f_{\sigma_k}(x_k, u_k)$
\EndFor
\end{algorithmic}
\end{algorithm}

\subsection{Architecture Overview}

S-RH-MI-DPC replaces the per-step MINLP with a neural network.
The forecast window $\bar{\xi}_k := [s_{k:k+T},\, d_{k:k+T},\,
\pi^b_{k:k+T},\, \pi^s_{k:k+T}]$ contains solar, demand, and
price profiles over the horizon.
The policy input $\phi_k$ concatenates the normalised forecast,
current SOC, and a sinusoidal time encoding:
\begin{equation}
\phi_k =
\begin{aligned}
\bigl[\, &
s_{k:k+T}/s_{\max},\;
d_{k:k+T}/d_{\max}, \\
& \pi^b_{k:k+T}/\pi^b_{\max},\;
\Delta\pi_{k:k+T}/\Delta\pi_{\max}, \\
& x_k/E_{\mathrm{cap}},\;
\sin\!\left(\tfrac{2\pi k}{T}\right),\;
\cos\!\left(\tfrac{2\pi k}{T}\right)
\,\bigr]
\end{aligned}
\label{eq:input}
\end{equation}
so that $(\sigma_k, \hat{u}_k) = (\mu_\theta(\phi_k),\,
\pi_\theta(\phi_k))$, with $\phi_k$ encoding both state and
forecast.
The SOC scalar $x_k/E_{\mathrm{cap}}$ provides closed-loop state
feedback, and the sinusoidal encoding $(\sin, \cos)$ marks where
the window starts in the day, enabling consistent behavior whether
the horizon begins at midnight, noon, or any intraday hour.
The input dimension is $4T + 3$.

A three-layer MLP encoder with LayerNorm and dropout ($p=0.1$)
maps $\phi_k$ to a hidden representation
$\mathbf{h} \in \mathbb{R}^{d}$ ($d = 1024$), from which two
linear heads produce:
(i)~a mode head $\mu_\theta$ outputting logits over
$\Sigma = \{\mathrm{charge},\,\mathrm{discharge},\,
\mathrm{idle}\}$ for all $T$ steps, and
(ii)~a power head $\pi_\theta$ outputting scaled
control magnitudes $u_{c,k}, u_{d,k} \in [0, P_{\max}]$.

\subsection{Mode and Power Heads Outputs}

Mode exclusivity is enforced via Gumbel-Softmax with the
straight-through estimator~\cite{tang2025l2o_minlp}. The temperature
$\tau$ anneals linearly from $5.0$ to $0.1$ during training, and
$\sigma_k$ is a hard argmax over the mode logits at inference.
The mode $\sigma_k$ gates the power outputs:
\begin{equation}
\begin{aligned}
u_{c,k} &= \mathbf{1}[\sigma_k = \mathrm{charge}]
  \cdot \frac{\tanh(h^c_k)+1}{2}\, P_{\max}, \\
u_{d,k} &= \mathbf{1}[\sigma_k = \mathrm{discharge}]
  \cdot \frac{\tanh(h^d_k)+1}{2}\, P_{\max},
\end{aligned}
\label{eq:power_gate}
\end{equation}
so idle forces $u_{c,k} = u_{d,k} = 0$ by construction,
satisfying~\eqref{eq:mode_ex} architecturally.
The $\tanh$ maps power-head logits to $[0, P_{\max}]$ with stronger
gradients near zero than sigmoid, improving capacity utilization.

\subsection{Differentiable Rainflow Layer}

The component enabling end-to-end training on exact degradation
physics is the differentiable rainflow layer from our prior
work~\cite{Eshagh2026midpc}. In the forward pass, given the SOC
trajectory $\mathbf{x}_{0:T}$ from the deterministic
unroll~\eqref{eq:soc_dyn}, the rainflow algorithm identifies, for each cycle $i$, a peak
$\tau_p^i$, valley $\tau_v^i$, depth
$\delta_i = |x_{\tau_p^i} - x_{\tau_v^i}|$, and multiplicity
$n_i \in \{0.5, 1.0\}$.
\begin{equation}
  \frac{\partial C_{\mathrm{cyc}}}{\partial x_k}
  = \lambda_e\,\nabla_{\mathrm{exact}}(k)
  + \lambda_p\,\nabla_{\mathrm{proxy}}(k),
  \qquad \lambda_e = \lambda_p = 0.5.
  \label{eq:hybrid_grad}
\end{equation}
The exact component is sparse, nonzero only at identified extrema,
\begin{equation}
\nabla_{\mathrm{exact}}(x_k)
= \sum_{i:\, k \in \{\tau_p^i,\, \tau_v^i\}}
\frac{\partial}{\partial x_k}
\left(
\frac{R\, a\, \delta_i^{\,b}}{E_{\mathrm{cap}}}
\right),
  \label{eq:exact_grad}
\end{equation}
providing physics-accurate gradients at the SOC extrema that define
cycle depths.
The proxy component densely differentiates the surrogate
\begin{equation}
  \tilde{C} = \frac{Ra}{E_{\mathrm{cap}}^b}
              \sum_k \bigl(|\Delta x_k| + \epsilon\bigr)^b
  \label{eq:proxy}
\end{equation}
via autograd, providing a nonzero gradient at every timestep.
All gradients are clipped to $[-1, 1]$ for stability.

\subsection{Training Procedure}

Training samples are random $T$-step windows from synthetic 2-day
sequences with randomized start times
$k_{\mathrm{start}} \sim \mathrm{Uniform}[0, T)$ and initial SOC
$x_0 \sim \mathrm{Uniform}[x_{\min}, x_{\max}]$; Gaussian noise
(5\%) is added to all input channels for robustness.
The self-supervised training loss is
\begin{equation}
  \mathcal{L} = \mathcal{L}_{\mathrm{econ}}
              + w_e\,\mathcal{L}_{\mathrm{cyc}}
              + \lambda_{\mathrm{soc}}\,\mathcal{L}_{\mathrm{soc}},
  \label{eq:loss}
\end{equation}
where $\mathcal{L}_{\mathrm{econ}}$ is import cost minus export
revenue, $\mathcal{L}_{\mathrm{cyc}} = C_{\mathrm{cyc}}$ is the
exact rainflow degradation cost, and $\mathcal{L}_{\mathrm{soc}} =
\mathrm{mean}(\mathrm{ReLU}(x_{\min} - x_k)
+ \mathrm{ReLU}(x_k - x_{\max}))$ penalizes SOC violations.

\subsection{Convex QP Safety Filter}

At inference, S-RH-MI-DPC applies the learned policy and projects
its output onto the one-step feasible set before execution:
\begin{equation}
  u_k = \Pi_{\mathcal{U}_{\sigma_k}(x_k)}(\hat{u}_k), \qquad
  \hat{u}_k = \pi_\theta(x_k, \bar{\xi}_k),
  \label{eq:safety_filter}
\end{equation}
where the projection solves the convex quadratic program
\begin{align}
\min_{u_{c,k},\, u_{d,k}} \quad
  & (u_{c,k} - \hat{u}_{c,k})^2
  + (u_{d,k} - \hat{u}_{d,k})^2 \label{eq:qp_obj} \\
\text{s.t.} \quad
  & x_{\min} \leq f_{\sigma_k}(x_k, u_k) \leq x_{\max},
    \label{eq:qp_bounds} \\
  & 0 \leq u_{c,k} \leq P_{\max},\;
    0 \leq u_{d,k} \leq P_{\max}, \label{eq:qp_power} \\
  & u_{c,k} = 0 \;\text{if}\; \sigma_k \neq \mathrm{charge},\;
    u_{d,k} = 0 \;\text{if}\; \sigma_k \neq \mathrm{discharge}.
    \label{eq:mode_lock}
\end{align}
solved via OSQP.
The mode locks~\eqref{eq:mode_lock} enforce exclusivity by zeroing
the power channel not selected by $\sigma_k$, so the projection
adjusts only the magnitude within the selected mode, never its
direction, preserving~\eqref{eq:mode_ex} exactly.
The QP is invoked only when
$f_{\sigma_k}(x_k, \hat{u}_k) \notin \mathcal{X}$; feasible outputs
pass through unchanged.

\section{Theoretical Analysis}
\label{sec:theory}

\subsection{Hybrid Closed-Loop System Definition}

We model the proposed controller as a switched (hybrid) dynamical
system driven by a learned mode-selection policy and a
mode-conditioned safety filter.
At each time step $k$, the closed-loop evolution is given by
\begin{align}
  \sigma_k &= \mu_\theta(x_k, \bar{\xi}_k) \in \Sigma,
    \label{eq:mode_select} \\
  \hat{u}_k &= \pi_\theta(x_k, \bar{\xi}_k),
    \label{eq:nn_action} \\
  u_k &= \Pi_{\mathcal{U}_{\sigma_k}(x_k)}(\hat{u}_k),
    \label{eq:project} \\
  x_{k+1} &= f_{\sigma_k}(x_k, u_k),
    \label{eq:switched_dyn}
\end{align}
where $\Sigma$, $u_k$, $f_{\sigma_k}$, and $\mathcal{X}$
are as defined in Section~\ref{sec:problem};
$\mu_\theta \colon \mathcal{X} \times \Xi \to \Sigma$ is the
neural network mode-selection map
(argmax over the mode head logits);
$\pi_\theta \colon \mathcal{X} \times \Xi \to \mathbb{R}^2$
generates continuous control candidates; and
$\Pi_{\mathcal{U}_{\sigma}(x)}(\cdot)$ denotes the Euclidean
projection onto the mode-conditioned feasible set
$\mathcal{U}_{\sigma}(x)$, separating discrete mode selection,
continuous control generation, and safe projection.

For mode $\sigma \in \Sigma$ at state $x_k \in \mathcal{X}$,
the mode-conditioned feasible set is
\begin{equation}
\mathcal{U}_\sigma(x_k) = \bigl\{\, u_k \in \mathbb{R}^2
\;\big|\;
0 \le u_k \le P_{\max},\;
u_k^{\bar{\sigma}} = 0,\;
f_\sigma(x_k, u_k) \in \mathcal{X}
\,\bigr\},
\label{eq:mode_feasible_set}
\end{equation}
where $u_k^{\bar{\sigma}} = 0$ zeroes the power channel not
selected by mode $\sigma$ (e.g.\ $u_{d,k} = 0$ in
charge mode), and $f_\sigma$ is the mode-dependent
dynamics~\eqref{eq:mode_dynamics}.

At step $k$, let
$\bar{\xi}_k := \xi_{k:k+T-1} \in \mathbb{R}^{n_\xi T}$
denote the length-$T$ forecast window,
$\Xi$ the compact set of admissible windows, and
$e_k := \lVert \bar{\xi}_k -
\bar{\xi}_k^{\mathrm{true}} \rVert_2$ the per-step
forecast error.
The trajectory cost from~\eqref{eq:opt_problem} is
restated in terms of the control sequence as
\begin{equation}
\begin{aligned}
J(\mathbf{u}_{0:T-1};\, x_0) :=
&\sum_{k=0}^{T-1} \Bigl(
    \pi^b_k\, p^{\mathrm{imp}}_k
    - \pi^s_k\, p^{\mathrm{exp}}_k
\Bigr) \Delta t \\
&+ C_{\mathrm{cyc}}\Bigl(
    \mathbf{x}_{0:T}(\mathbf{u}_{0:T-1}, x_0)
\Bigr),
\end{aligned}
\label{eq:traj_cost}
\end{equation}
where
$p^{\mathrm{imp}}_k = \max(0,\, d_k - s_k + u_{c,k} - u_{d,k})$
and
$p^{\mathrm{exp}}_k = \max(0,\, -(d_k - s_k + u_{c,k} - u_{d,k}))$.

\subsection{Feasibility of the Hybrid Controller}

\begin{assumption}[Mode-Conditioned Feasibility]
\label{ass:mode_feasibility}
For all $x \in \mathcal{X}$ and all $\sigma \in \Sigma$,
the mode-conditioned feasible set
$\mathcal{U}_\sigma(x)$ is nonempty, closed, and convex.
\end{assumption}

\begin{remark}
Assumption~\ref{ass:mode_feasibility} holds by construction
for the BESS system: no minimum power constraint is imposed,
so the zero-magnitude action $u_k = 0$ is admissible in every
mode, giving $f_{\sigma}(x_k, \mathbf{0}) = x_k \in \mathcal{X}$.
Convexity and closedness follow from linearity of $f_\sigma$ in
$u_k$ and compactness of the power and SOC bounds.
\end{remark}

\begin{theorem}[Well-Posed Hybrid Feasibility]
\label{thm:pointwise}
Under Assumption~\ref{ass:mode_feasibility}, for any
$x_k \in \mathcal{X}$ and any mode
$\sigma_k \in \Sigma$ selected by~\eqref{eq:mode_select},
the mode-conditioned projection~\eqref{eq:project} admits
a unique solution.
\end{theorem}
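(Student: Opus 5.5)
The plan is to reduce the claim to the standard existence-and-uniqueness result for Euclidean projection onto a nonempty, closed, convex subset of $\mathbb{R}^2$. By Assumption~\ref{ass:mode_feasibility}, $\mathcal{U}_{\sigma_k}(x_k)$ has exactly these three properties for every $x_k \in \mathcal{X}$ and every $\sigma_k \in \Sigma$. So the projection~\eqref{eq:project}, which is the minimizer of the strictly convex QP~\eqref{eq:qp_obj} over this set, should be well defined no matter which mode the network $\mu_\theta$ selects. I will also check that the QP~\eqref{eq:qp_obj}--\eqref{eq:mode_lock} and the set~\eqref{eq:mode_feasible_set} describe the same feasible region, so that the theorem applies to what OSQP actually solves.

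\emph{Existence.} Fix $\hat{u}_k = \pi_\theta(x_k,\bar{\xi}_k) \in \mathbb{R}^2$ and write $g(u) = \lVert u - \hat{u}_k\rVert_2^2$.
\begin{enumerate}
\item Pick any $u^0 \in \mathcal{U}_{\sigma_k}(x_k)$; for instance $u^0 = 0$, as in the remark following Assumption~\ref{ass:mode_feasibility}.
\item Restrict attention to the sublevel set $\{u \in \mathcal{U}_{\sigma_k}(x_k) : g(u) \le g(u^0)\}$. It is closed, as the intersection of a closed set with a closed ball. It is bounded; in fact boundedness is already supplied by the box $0 \le u \le P_{\max}$.
\item This sublevel set is therefore compact, and $g$ is continuous, so Weierstrass gives a minimizer.
\end{enumerate}

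\emph{Uniqueness.} Suppose $u^1 \neq u^2$ are both minimizers with common value $g^*$. By convexity of $\mathcal{U}_{\sigma_k}(x_k)$, the midpoint $\bar u = \tfrac12(u^1+u^2)$ is feasible. The parallelogram identity gives
\[
g(\bar u) = \tfrac12 g(u^1) + \tfrac12 g(u^2) - \tfrac14\lVert u^1-u^2\rVert_2^2 < g^*,
\]
which contradicts minimality. Equivalently, $g$ has Hessian $2I \succ 0$, so it is strictly convex.

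The only step needing real care is confirming that the hypotheses of Assumption~\ref{ass:mode_feasibility} actually hold for the concrete set~\eqref{eq:mode_feasible_set}.
\begin{itemize}
\item By~\eqref{eq:mode_dynamics}, for each fixed $\sigma$ and $x_k$ the map $u \mapsto f_\sigma(x_k,u)$ is affine.
\item Hence $\mathcal{U}_\sigma(x_k)$ is an intersection of finitely many closed half-spaces and hyperplanes: the box $0\le u\le P_{\max}$, the mode lock $u^{\bar\sigma}=0$, and the two SOC inequalities. It is therefore a closed convex polyhedron.
\item It is nonempty because $u=0$ gives $f_\sigma(x_k,0)=x_k\in\mathcal{X}$.
\end{itemize}
With these checks the argument is independent of $\theta$ and of whether the selected mode is optimal. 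This independence is the point of the theorem and what the later recursive-feasibility argument will use.
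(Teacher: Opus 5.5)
Your proof is correct and takes the same route as the paper: nonemptiness, closedness, and convexity of $\mathcal{U}_{\sigma_k}(x_k)$ from Assumption~\ref{ass:mode_feasibility}, combined with strict convexity of the quadratic objective, yield a unique minimizer. The paper cites the standard projection result of Bauschke--Combettes and checks the assumption in the preceding remark, whereas you prove existence (via Weierstrass) and uniqueness (via the parallelogram identity) explicitly.
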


\begin{proof}
$\mathcal{U}_{\sigma_k}(x_k)$ is nonempty, closed, and convex
(Assumption~\ref{ass:mode_feasibility}), and the objective
$\tfrac{1}{2}\lVert u - \hat{u}_k \rVert_2^2$ is strictly convex, so
the minimiser is unique~\cite{bauschke2017convex}.
\end{proof}

\begin{theorem}[Recursive State Feasibility / Positive Invariance]
\label{thm:recursive}
Under Assumption~\ref{ass:mode_feasibility}, if
$x_0 \in \mathcal{X}$, then $x_k \in \mathcal{X}$ for all
$k \geq 0$ under the hybrid closed-loop
system~\eqref{eq:mode_select}--\eqref{eq:switched_dyn}.
\end{theorem}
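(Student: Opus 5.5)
The argument is induction on $k$, with Theorem~\ref{thm:pointwise} doing the work at each step.

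\emph{Base case.} $x_0 \in \mathcal{X}$ holds by hypothesis.

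\emph{Inductive step.} Suppose $x_k \in \mathcal{X}$. Whatever mode the network returns in~\eqref{eq:mode_select}, we have $\sigma_k \in \Sigma$, since the argmax ranges over that finite set. Likewise $\hat{u}_k\in\mathbb{R}^2$ is arbitrary, and nothing about the weights $\theta$ is used. By Assumption~\ref{ass:mode_feasibility}, $\mathcal{U}_{\sigma_k}(x_k)$ is nonempty, closed, and convex. Theorem~\ref{thm:pointwise} then guarantees that the projection~\eqref{eq:project} is well defined, with a unique $u_k = \Pi_{\mathcal{U}_{\sigma_k}(x_k)}(\hat{u}_k)$. A projection lands in its target set, so $u_k \in \mathcal{U}_{\sigma_k}(x_k)$. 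By the defining constraint in~\eqref{eq:mode_feasible_set}, this means $f_{\sigma_k}(x_k,u_k)\in\mathcal{X}$. Hence $x_{k+1}\in\mathcal{X}$ by~\eqref{eq:switched_dyn}, which closes the induction.

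\emph{Implementation subtlety.} The step needing care is that the deployed algorithm calls the QP only when $f_{\sigma_k}(x_k,\hat{u}_k)\notin\mathcal{X}$. I will argue that the pass-through branch coincides with the projection, so the closed loop really is~\eqref{eq:project}. The network output already satisfies the power bounds and the mode gating, by the tanh scaling and the indicator gating in~\eqref{eq:power_gate}. If in addition $f_{\sigma_k}(x_k,\hat{u}_k)\in\mathcal{X}$, then $\hat{u}_k\in\mathcal{U}_{\sigma_k}(x_k)$, and the projection of a point already in the set is the point itself. Either branch therefore yields $u_k\in\mathcal{U}_{\sigma_k}(x_k)$.

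\emph{Concluding remark.} I will note that the nonemptiness needed at every step follows from the remark after Assumption~\ref{ass:mode_feasibility}: $u_k=0$ is admissible, so the assumption is used only at states already shown to lie in $\mathcal{X}$. This is exactly what the induction supplies, and it establishes positive invariance of $\mathcal{X}$ independently of mode optimality.
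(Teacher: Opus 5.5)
Your proof is correct and follows the paper's argument: induction on $k$, with Theorem~\ref{thm:pointwise} supplying a well-defined $u_k \in \mathcal{U}_{\sigma_k}(x_k)$ and the defining constraint in~\eqref{eq:mode_feasible_set} giving $x_{k+1} \in \mathcal{X}$. Your extra check that the pass-through branch of Algorithm~\ref{alg:online} coincides with the projection (using the gating in~\eqref{eq:power_gate}) is sound, but the theorem does not need it, since it is posed directly for the projected system~\eqref{eq:mode_select}--\eqref{eq:switched_dyn}.
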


\begin{proof}
By induction. The base case $x_0 \in \mathcal{X}$ holds by
assumption. Assuming $x_k \in \mathcal{X}$, Theorem~\ref{thm:pointwise}
returns $u_k \in \mathcal{U}_{\sigma_k}(x_k)$, and every element of
that set satisfies $f_{\sigma_k}(x_k, u_k) \in \mathcal{X}$
by~\eqref{eq:mode_feasible_set}, so
$x_{k+1} \in \mathcal{X}$~\cite{mayne2000constrained}.
\end{proof}

\begin{remark}
Theorem~\ref{thm:recursive} establishes one-step state
feasibility, and hence positive invariance of $\mathcal{X}$,
for \emph{any} switching signal $\{\sigma_k\}_{k \ge 0}$
produced by the neural network, independently of whether the
selected mode is globally optimal.
The result assumes exact nominal SOC dynamics; under an
additive process disturbance
$x_{k+1}=f_{\sigma_k}(x_k,u_k)+w_k$ it holds only with
robust tightening of $\mathcal{X}$ or a robust invariant set.
\end{remark}

\begin{theorem}[Mode-Conditioned Lipschitz Continuity]
\label{thm:lipschitz}
Suppose $\pi_\theta$ is $L_\pi$-Lipschitz on
$\mathcal{X} \times \Xi$.
For fixed $x_k \in \mathcal{X}$ and fixed mode
$\sigma_k \in \Sigma$, the composite controller
$\kappa_{\sigma_k}(x_k, \cdot) :=
\Pi_{\mathcal{U}_{\sigma_k}(x_k)} \circ
\pi_\theta(x_k, \cdot)$
is $L_\pi$-Lipschitz in $\bar{\xi}$:
\begin{equation}
  \lVert \kappa_{\sigma_k}(x_k, \bar{\xi})
       - \kappa_{\sigma_k}(x_k, \bar{\xi}') \rVert_2
  \leq L_\pi \lVert \bar{\xi} - \bar{\xi}' \rVert_2,
  \quad \forall\, \bar{\xi}, \bar{\xi}' \in \Xi.
  \label{eq:lipschitz}
\end{equation}
\end{theorem}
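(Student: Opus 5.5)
The plan is to write $\kappa_{\sigma_k}(x_k,\cdot)$ as the composition of the fixed-set projection $\Pi_{\mathcal{U}_{\sigma_k}(x_k)}$ with $\bar{\xi} \mapsto \pi_\theta(x_k,\bar{\xi})$, and to multiply Lipschitz constants. The key point is that with $x_k$ and $\sigma_k$ both frozen, the constraint set $\mathcal{U}_{\sigma_k}(x_k)$ in \eqref{eq:mode_feasible_set} does not depend on $\bar{\xi}$. The forecast enters only through the candidate $\hat{u}_k$, never through the feasible set. So there is no set-valued sensitivity to control, and we only need regularity of a single projection operator.

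\textbf{Step 1 (projection is nonexpansive).} By Assumption~\ref{ass:mode_feasibility}, the set $C := \mathcal{U}_{\sigma_k}(x_k)$ is nonempty, closed and convex. By Theorem~\ref{thm:pointwise}, $\Pi_C$ is well defined and single-valued. We then invoke the standard firm-nonexpansiveness of projections onto closed convex sets in Hilbert space~\cite{bauschke2017convex}. For completeness, the derivation goes as follows. Write the variational inequality $\langle v - \Pi_C(v),\, c - \Pi_C(v)\rangle \le 0$ for all $c\in C$. Apply it at $v$ with $c=\Pi_C(v')$, and at $v'$ with $c=\Pi_C(v)$. Adding the two gives $\lVert \Pi_C(v)-\Pi_C(v')\rVert_2^2 \le \langle v-v',\, \Pi_C(v)-\Pi_C(v')\rangle$. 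Cauchy--Schwarz then yields $\lVert \Pi_C(v)-\Pi_C(v')\rVert_2 \le \lVert v-v'\rVert_2$.

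\textbf{Step 2 (restriction of $\pi_\theta$).} The hypothesis that $\pi_\theta$ is $L_\pi$-Lipschitz on $\mathcal{X}\times\Xi$, with respect to the product Euclidean norm, implies the following for pairs sharing the same first coordinate $x_k$: $\lVert \pi_\theta(x_k,\bar{\xi})-\pi_\theta(x_k,\bar{\xi}')\rVert_2 \le L_\pi \lVert (x_k,\bar{\xi})-(x_k,\bar{\xi}')\rVert_2 = L_\pi\lVert\bar{\xi}-\bar{\xi}'\rVert_2$.

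\textbf{Step 3 (compose).} Set $v=\pi_\theta(x_k,\bar{\xi})$ and $v'=\pi_\theta(x_k,\bar{\xi}')$ in Step 1, then chain with Step 2 to obtain \eqref{eq:lipschitz}.

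The main obstacle is not technical but one of scoping. The result must be stated conditionally on a fixed mode. If $\sigma_k$ were allowed to vary with $\bar{\xi}$ through the argmax $\mu_\theta$, the set would jump between the three feasible sets, for example from a charge-only segment to a discharge-only segment. Then $\kappa$ could be discontinuous, and no global Lipschitz bound would hold. Likewise, varying $x_k$ moves the set. That would require a Hausdorff-distance argument, with the set shifting by at most $|x_k-x_k'|/(\eta_c\Delta t)$ or $\eta_d|x_k-x_k'|/\Delta t$ along the active channel. This is excluded by fixing $x_k$. I would add a brief remark making explicit that the constant $L_\pi$ is inherited with no inflation precisely because the projection contributes factor $1$. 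I would also note that the QP being bypassed when $\hat{u}_k$ is feasible is consistent with this, since $\Pi_C$ is then the identity on $C$.
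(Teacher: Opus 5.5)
Your proposal is correct and follows essentially the same route as the paper: non-expansiveness of the projection onto the fixed nonempty closed convex set $\mathcal{U}_{\sigma_k}(x_k)$, composed with the $L_\pi$-Lipschitz map $\pi_\theta(x_k,\cdot)$. Your variational-inequality derivation of non-expansiveness and your explicit restriction step only spell out details that the paper leaves to the citation of Bauschke--Combettes.
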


\begin{proof}
For fixed $x_k$ and $\sigma_k$, $\mathcal{U}_{\sigma_k}(x_k)$ is a
fixed nonempty closed convex set, so its projection is
non-expansive~\cite{bauschke2017convex}. Composing with the
$L_\pi$-Lipschitz $\pi_\theta(x_k,\cdot)$ gives
$L_\kappa \le L_\pi$.
\end{proof}

\begin{remark}
The bound~\eqref{eq:lipschitz} holds within a fixed mode;
the overall mixed-integer policy is therefore only
piecewise Lipschitz.
At mode transitions (where $\sigma_k$ changes) the policy may
exhibit discontinuities inherent to switched systems, so the
Lipschitz guarantee applies to forecast perturbations that do
not induce a mode switch.
\end{remark}


\subsection{Conditional Performance Bounds}
This subsection characterizes how suboptimality decomposes into
identifiable terms under stated assumptions, linking closed-loop
cost to training quality, discrete mode mismatch, and forecast
accuracy.

\begin{assumption}[Cost Regularity]
\label{ass:cost}
The trajectory cost $J$ defined in~\eqref{eq:traj_cost}
is $L_J$-Lipschitz in $\mathbf{u}_{0:T-1}$ on the feasible
control set, with a finite constant $L_J = L_J(T) < \infty$
that may depend on the horizon.
\end{assumption}

\begin{remark}
The constant $L_J(T)$ may depend on the horizon, the system
dynamics, the charge/discharge efficiencies, the energy prices, and
the regularity of the forward degradation functional; we do not
require a closed-form expression for it in the subsequent bounds.
As one contributing component, the economic stage cost applies
$\max(0,\cdot)$ to a linear function of
$u_k$, and, since the stage contains both import and export prices,
has per-stage constant
$L_{\ell,k} \le \sqrt{2}\,\max(\pi_k^b, \pi_k^s)\,\Delta t$ and
stacked economic constant
$L_{e,\mathrm{traj}} \le \sqrt{2T}\,\Delta t\,
\max_k \max(\pi_k^b,\pi_k^s)$.
The degradation functional $C_{\mathrm{cyc}}$ uses exact
rainflow counting in the forward pass with a hybrid surrogate
gradient used only for training; its training-time gradient
clipping does not certify regularity of the forward map, and exact
rainflow extraction is piecewise-defined, so we make no separate
Lipschitz claim for $C_{\mathrm{cyc}}$ and absorb its contribution
into $L_J(T)$ via Assumption~\ref{ass:cost}.
\end{remark}

\begin{assumption}[Training Approximation Quality]
\label{ass:eps}
Let $x_k^{\mathrm{NN}}$ denote the NN closed-loop state, and let
\begin{equation}
u_{k,\mathrm{err}}^*
:= \pi_{\mathrm{MPC}}\bigl(x_k^{\mathrm{NN}}, \bar{\xi}_k\bigr)
\end{equation}
be the MPC-PWL action under the (noisy) forecast $\bar{\xi}_k$
evaluated at the NN state.
There exists $\epsilon > 0$ such that
$\lVert \pi_\theta(x_k^{\mathrm{NN}}, \bar{\xi}_k)
  - u_{k,\mathrm{err}}^* \rVert_2 \leq \epsilon$ for all $k$.
\end{assumption}

\begin{remark}
Assumption~\ref{ass:eps} bounds the continuous imitation
error only; it does not assert that the learned mode matches the
comparator mode. When the two modes disagree,
$u_{k,\mathrm{err}}^*$ may lie outside
$\mathcal{U}_{\sigma_k}(x_k^{\mathrm{NN}})$, and the resulting
discrete error is captured by the mode-mismatch term
$\delta_k^{\mathrm{mode}}$ below.
\end{remark}

\begin{assumption}[Forecast-Induced Deviation]
\label{ass:delta}
Let $u_{k,\mathrm{true}}^* :=
\pi_{\mathrm{MPC}}(x_k^{\mathrm{NN}}, \bar{\xi}_k^{\mathrm{true}})$
be the MPC-PWL action under the true forecast, also evaluated at
the NN state, and write $\mathbf{u}^*_{\mathrm{err}}$,
$\mathbf{u}^*_{\mathrm{true}}$ for the corresponding sequences.
There exists $b > 0$ such that
\begin{equation}
  \bigl| J(\mathbf{u}^*_{\mathrm{err}};\, x_0) -
         J(\mathbf{u}^*_{\mathrm{true}};\, x_0) \bigr|
  \leq b \sum_{k=0}^{T-1} e_k.
\end{equation}
\end{assumption}

\begin{remark}
For the mixed-integer MPC-PWL comparator,
Assumption~\ref{ass:delta} should be read as a local
regularity condition where it holds over forecast neighborhoods on
which the optimal mode sequence is unchanged and the continuous
subproblem satisfies standard sensitivity
conditions~\cite{lin2022boundedregretmpcperturbationanalysis},
so that forecast errors perturb the solution and propagate to
bounded cost differences via Assumption~\ref{ass:cost}.
A perturbation that flips the optimal integer sequence can induce
a jump, so the bound is not claimed globally.
\end{remark}

\begin{theorem}[Trajectory Tracking Bound]
\label{thm:one_step}
Under Assumptions~\ref{ass:cost}--\ref{ass:eps}, define the
per-step mode-mismatch
$\delta_k^{\mathrm{mode}} := \operatorname{dist}\!\bigl(
u_{k,\mathrm{err}}^*,\,
\mathcal{U}_{\sigma_k}(x_k^{\mathrm{NN}})\bigr)$ and
$\lVert \boldsymbol{\delta}^{\mathrm{mode}} \rVert_2 :=
\bigl(\sum_{k=0}^{T-1} (\delta_k^{\mathrm{mode}})^2\bigr)^{1/2}$.
Then the per-step tracking error satisfies
$\lVert u_k - u_{k,\mathrm{err}}^* \rVert_2 \leq \epsilon +
\delta_k^{\mathrm{mode}}$, and the trajectory cost obeys
\begin{equation}
  \bigl| J(\mathbf{u}^{\mathrm{NN}}_{0:T-1};\, x_0) -
         J(\mathbf{u}^*_{\mathrm{err},0:T-1};\, x_0) \bigr|
  \leq L_J \bigl( \sqrt{T}\, \epsilon
        + \lVert \boldsymbol{\delta}^{\mathrm{mode}} \rVert_2 \bigr).
  \label{eq:one_step}
\end{equation}
\end{theorem}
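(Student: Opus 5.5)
The per-step bound comes from a triangle inequality through the projection of the comparator action onto the learned mode's feasible set; summing and applying Assumption~\ref{ass:cost} then gives the trajectory bound. Fix $k$ and write $\mathcal{U}_k := \mathcal{U}_{\sigma_k}(x_k^{\mathrm{NN}})$, which is nonempty, closed and convex by Assumption~\ref{ass:mode_feasibility}. Let $\hat{u}_k = \pi_\theta(x_k^{\mathrm{NN}},\bar{\xi}_k)$ and $u_k = \Pi_{\mathcal{U}_k}(\hat{u}_k)$. By Theorem~\ref{thm:pointwise} both $u_k$ and $p_k := \Pi_{\mathcal{U}_k}(u_{k,\mathrm{err}}^*)$ are well defined, and by definition $\lVert p_k - u_{k,\mathrm{err}}^*\rVert_2 = \delta_k^{\mathrm{mode}}$. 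Then
\begin{equation*}
\lVert u_k - u_{k,\mathrm{err}}^*\rVert_2 \le \lVert \Pi_{\mathcal{U}_k}(\hat{u}_k) - \Pi_{\mathcal{U}_k}(u_{k,\mathrm{err}}^*)\rVert_2 + \lVert p_k - u_{k,\mathrm{err}}^*\rVert_2 \le \lVert \hat{u}_k - u_{k,\mathrm{err}}^*\rVert_2 + \delta_k^{\mathrm{mode}} \le \epsilon + \delta_k^{\mathrm{mode}}.
\end{equation*}
Here the second inequality uses non-expansiveness of the projection onto a fixed closed convex set, the same fact used in Theorem~\ref{thm:lipschitz}, and the third uses Assumption~\ref{ass:eps}. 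This is the per-step claim. When the modes agree, $u_{k,\mathrm{err}}^*\in\mathcal{U}_k$ and $\delta_k^{\mathrm{mode}}=0$.

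For the trajectory bound, stack the per-step errors into $\mathbf{u}^{\mathrm{NN}}_{0:T-1}-\mathbf{u}^*_{\mathrm{err},0:T-1}$ and invoke Assumption~\ref{ass:cost}:
\begin{equation*}
\bigl|J(\mathbf{u}^{\mathrm{NN}};x_0)-J(\mathbf{u}^*_{\mathrm{err}};x_0)\bigr| \le L_J\Bigl(\sum_{k=0}^{T-1}\lVert u_k-u_{k,\mathrm{err}}^*\rVert_2^2\Bigr)^{1/2} \le L_J\Bigl(\sum_{k=0}^{T-1}(\epsilon+\delta_k^{\mathrm{mode}})^2\Bigr)^{1/2}.
\end{equation*}
Minkowski's inequality in $\mathbb{R}^T$ applied to the vectors $(\epsilon,\dots,\epsilon)$ and $\boldsymbol{\delta}^{\mathrm{mode}}$ then gives $\sqrt{T}\,\epsilon+\lVert\boldsymbol{\delta}^{\mathrm{mode}}\rVert_2$, which is \eqref{eq:one_step}.

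The main obstacle is applying Assumption~\ref{ass:cost} legitimately. That assumption is stated on the feasible control set, but $\mathbf{u}^*_{\mathrm{err}}$ is built from comparator actions evaluated at the NN states $x_k^{\mathrm{NN}}$. Replaying this sequence from $x_0$ need not keep the SOC in $\mathcal{X}$, and when modes mismatch an individual $u^*_{k,\mathrm{err}}$ may lie outside $\mathcal{U}_k$. I would first handle this by reading the Lipschitz property of $J$ on the box $[0,P_{\max}]^{2T}$, which contains both sequences. The economic part is globally Lipschitz there by the remark after Assumption~\ref{ass:cost}, and the degradation contribution is absorbed into $L_J(T)$ as stated. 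Failing that, I would state explicitly that both sequences are assumed to lie in the domain where Assumption~\ref{ass:cost} holds. A secondary point is the choice of norm: Assumption~\ref{ass:cost} must be taken with respect to the Euclidean norm on the stacked control vector, since that is what produces the $\sqrt{T}$ factor.
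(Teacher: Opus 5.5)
Your proposal is correct and follows the paper's proof: you insert the projection of $u_{k,\mathrm{err}}^*$ onto $\mathcal{U}_{\sigma_k}(x_k^{\mathrm{NN}})$ and use the triangle inequality and non-expansiveness to obtain $\epsilon + \delta_k^{\mathrm{mode}}$, then invoke Assumption~\ref{ass:cost}. You also make explicit the stacked-Euclidean-norm reading of $L_J$ and the Minkowski step that give $\sqrt{T}\,\epsilon + \lVert\boldsymbol{\delta}^{\mathrm{mode}}\rVert_2$, and you correctly flag that $\mathbf{u}^*_{\mathrm{err}}$ must lie in the domain where Assumption~\ref{ass:cost} holds, which the paper's one-line ``follows from Assumption~\ref{ass:cost}'' leaves unstated.
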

\begin{proof}
Writing $C_k := \mathcal{U}_{\sigma_k}(x_k^{\mathrm{NN}})$ and
inserting $\Pi_{C_k}(u_{k,\mathrm{err}}^*)$, the triangle
inequality and non-expansiveness of the projection give
\begin{align}
  \lVert u_k - u_{k,\mathrm{err}}^* \rVert_2
  &= \bigl\lVert \Pi_{C_k}(\hat{u}_k)
        - u_{k,\mathrm{err}}^* \bigr\rVert_2
  \notag \\
  &\leq \bigl\lVert \Pi_{C_k}(\hat{u}_k)
        - \Pi_{C_k}(u_{k,\mathrm{err}}^*) \bigr\rVert_2
  \notag \\
  &\qquad + \bigl\lVert \Pi_{C_k}(u_{k,\mathrm{err}}^*)
        - u_{k,\mathrm{err}}^* \bigr\rVert_2
  \notag \\
  &\leq \lVert \hat{u}_k - u_{k,\mathrm{err}}^* \rVert_2
      + \operatorname{dist}(u_{k,\mathrm{err}}^*, C_k)
  \notag \\
  &\leq \epsilon + \delta_k^{\mathrm{mode}} .
\end{align}
The cost bound follows from Assumption~\ref{ass:cost}.
\end{proof}

\begin{theorem}[Dynamic Regret Bound]
\label{thm:regret}
Define the tracking regret
$R_T := J(\mathbf{u}^{\mathrm{NN}}_{0:T-1};\, x_0)
       - J(\mathbf{u}^*_{\mathrm{true},0:T-1};\, x_0)$.
Under Assumptions~\ref{ass:cost}--\ref{ass:delta},
\begin{equation}
  |R_T| \leq L_J \bigl( \sqrt{T}\, \epsilon
            + \lVert \boldsymbol{\delta}^{\mathrm{mode}} \rVert_2
            \bigr)
            + b \sum_{k=0}^{T-1} e_k.
  \label{eq:regret_bound}
\end{equation}
\end{theorem}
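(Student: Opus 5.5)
The plan is to insert the intermediate comparator $J(\mathbf{u}^*_{\mathrm{err},0:T-1};\,x_0)$ into $R_T$ and split the regret into a tracking part and a forecast part:
\begin{equation*}
  R_T = \Bigl[J(\mathbf{u}^{\mathrm{NN}}_{0:T-1};\,x_0) - J(\mathbf{u}^*_{\mathrm{err},0:T-1};\,x_0)\Bigr]
  + \Bigl[J(\mathbf{u}^*_{\mathrm{err},0:T-1};\,x_0) - J(\mathbf{u}^*_{\mathrm{true},0:T-1};\,x_0)\Bigr].
\end{equation*}
Applying the triangle inequality to $|R_T|$ leaves two absolute differences, and each one is already controlled by an earlier result.

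The first bracket is exactly the left-hand side of \eqref{eq:one_step}. Theorem~\ref{thm:one_step} therefore bounds it by $L_J(\sqrt{T}\,\epsilon + \lVert\boldsymbol{\delta}^{\mathrm{mode}}\rVert_2)$. Behind this sits the per-step bound $\lVert u_k - u^*_{k,\mathrm{err}}\rVert_2 \le \epsilon + \delta_k^{\mathrm{mode}}$. Stacking it over $k$ and using Minkowski's inequality on the $T$-vector gives $\lVert \mathbf{u}^{\mathrm{NN}} - \mathbf{u}^*_{\mathrm{err}}\rVert_2 \le \sqrt{T}\,\epsilon + \lVert\boldsymbol{\delta}^{\mathrm{mode}}\rVert_2$. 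Assumption~\ref{ass:cost} then converts this to a cost bound. This step needs no new work; I would simply cite Theorem~\ref{thm:one_step}.

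The second bracket is bounded directly by Assumption~\ref{ass:delta}, giving $b\sum_{k=0}^{T-1} e_k$. Adding the two bounds yields \eqref{eq:regret_bound}.

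The only delicate point is consistency of arguments, which I expect to be the main (mild) obstacle. Both $\mathbf{u}^*_{\mathrm{err}}$ and $\mathbf{u}^*_{\mathrm{true}}$ are evaluated along the NN closed-loop states $x_k^{\mathrm{NN}}$ and all costs are taken from the same $x_0$, so the middle term $J(\mathbf{u}^*_{\mathrm{err}};\,x_0)$ is literally the same quantity in Theorem~\ref{thm:one_step} and in Assumption~\ref{ass:delta}. I would also check that Assumption~\ref{ass:cost} is applied on a set containing all three sequences, extending it if necessary to the relevant control set. The local-validity caveat for Assumption~\ref{ass:delta} (no mode flips of the comparator) then carries over verbatim to the regret bound.
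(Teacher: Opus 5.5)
Your proposal is correct and matches the paper's proof: insert $J(\mathbf{u}^*_{\mathrm{err}};\,x_0)$, apply the triangle inequality, and bound the two pieces by Theorem~\ref{thm:one_step} and Assumption~\ref{ass:delta}, respectively. Your added remarks are sound refinements that the paper leaves implicit, namely the Minkowski stacking step behind Theorem~\ref{thm:one_step} and the check that Assumption~\ref{ass:cost} must apply on a set containing all three control sequences.
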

\begin{proof}
Insert the intermediate term
$J(\mathbf{u}^*_{\mathrm{err}};\, x_0)$ and apply the triangle
inequality:
\begin{align}
  |R_T|
  &\leq
  \underbrace{\bigl| J(\mathbf{u}^{\mathrm{NN}};\, x_0) -
    J(\mathbf{u}^*_{\mathrm{err}};\, x_0) \bigr|}_{
    \leq\; L_J(\sqrt{T}\,\epsilon
      + \lVert \boldsymbol{\delta}^{\mathrm{mode}} \rVert_2)
    \;\text{(Thm.~\ref{thm:one_step})}}
  +
  \underbrace{\bigl| J(\mathbf{u}^*_{\mathrm{err}};\, x_0) -
    J(\mathbf{u}^*_{\mathrm{true}};\, x_0) \bigr|}_{
    \leq\; b\sum_k e_k
    \;\text{(Asm.~\ref{ass:delta})}}.
\end{align}
\end{proof}

\begin{remark}[Mode-Agreement Special Case]
When the learned mode matches the comparator mode along the
closed-loop trajectory, $\sigma_k = \sigma_k^*$ for all $k$, then
$u_{k,\mathrm{err}}^* \in \mathcal{U}_{\sigma_k}(x_k^{\mathrm{NN}})$
and $\boldsymbol{\delta}^{\mathrm{mode}} = \mathbf{0}$, recovering
the clean bound
$|R_T| \leq L_J \sqrt{T}\,\epsilon + b \sum_k e_k$.
Because $\pi_\theta$ and the MPC-PWL comparator optimize the same
economic--degradation objective, their discrete decisions coincide at
most steps once training has converged, so
$\lVert \boldsymbol{\delta}^{\mathrm{mode}} \rVert_2$ is typically
small.
\end{remark}

\begin{remark}[Average Regret]
Dividing~\eqref{eq:regret_bound} by $T$,
\begin{equation}
  \frac{|R_T|}{T} \leq
    \frac{L_J\bigl(\sqrt{T}\,\epsilon
      + \lVert \boldsymbol{\delta}^{\mathrm{mode}} \rVert_2\bigr)}{T}
    + \frac{b}{T}\sum_{k=0}^{T-1} e_k.
  \label{eq:avg_regret}
\end{equation}
The first term vanishes as training converges ($\epsilon \to 0$)
and as mode agreement improves
($\lVert \boldsymbol{\delta}^{\mathrm{mode}} \rVert_2 \to 0$); its
horizon dependence is governed by $L_J = L_J(T)$, so we make no
unconditional $1/\sqrt{T}$ improvement claim.
The second term is the time-averaged forecast error, vanishing as
$e_k \to 0$.
Because MPC-PWL uses a piecewise-linear degradation approximation, the bound measures suboptimality relative to this baseline, not the global MINLP optimum; all terms are empirically validated in Section~\ref{sec:results}.
\end{remark}

\section{Experimental Results}
\label{sec:results}

We evaluate S-RH-MI-DPC on a BESS with a capacity
$E_{\mathrm{cap}} = 10$\,kWh, maximum charge/discharge power
$P_{\max} = 4.8$\,kW, efficiencies $\eta_c = \eta_d = 0.92$,
and SOC limits $[0.1, 0.9]\,E_{\mathrm{cap}}$.
The policy is trained with AdamW and OneCycleLR
($\mathrm{lr}_{\max} = 5\times10^{-4}$), gradient clipping
($\|\nabla\|_2 \leq 1.0$), degradation weight $w_e$
increasing from $0.1$ to $1.0$ via curriculum schedule,
SOC penalty weight $\lambda_{\mathrm{soc}} = 5$, for
1{,}500 epochs on 10{,}000 samples with batch size 48.

\subsection{Weekly Cost Performance}

Table~\ref{tab:weekly} summarises the 7-day cost breakdown.
S-RH-MI-DPC reaches a total weekly cost of \$41.33 versus \$38.65 for
MPC-PWL, a $6.9\%$ gap (\$2.68), with balanced economic and
degradation components confirming that the differentiable rainflow
layer trains on exact degradation physics.
All 168 steps are feasible ($100\%$), guaranteed by the QP safety
filter (invoked on $39.3\%$ of steps) regardless of NN output
quality.
MPC-PWL requires 451\,s (2.7\,s/step) as a branch-and-bound MILP,
whereas S-RH-MI-DPC requires 18.1\,s (0.11\,s/step), a $25\times$
speedup with fixed per-step latency and no online branch-and-bound. 
The learned mode matched the MPC-PWL comparator mode at all
steps ($100\%$ mode agreement), so the empirical mode-mismatch term
$\lVert \boldsymbol{\delta}^{\mathrm{mode}} \rVert_2 = 0$.

\begin{table}[t]
\centering
\caption{Weekly cost and timing (7 days).}
\label{tab:weekly}
\begin{tabular}{lccccc}
\toprule
Method & Econ. & Deg. & Total & vs $J^*$ & Time/step \\
\midrule
MPC-PWL   & \$32.10 & \$6.55 & \$38.65 & $0.0\%$  & 2.70\,s \\
S-RH-MI-DPC & \$35.24 & \$6.10 & \$41.33 & $+6.9\%$ & 0.11\,s \\
\bottomrule
\end{tabular}
\end{table}

\subsection{Experiment A: Regret vs Training Quality}

To validate the training-quality dependence of the regret
bound~\eqref{eq:avg_regret}, we evaluate S-RH-MI-DPC checkpoints
across training against $J^* = \$38.65$ (MPC-PWL).
Table~\ref{tab:exp_a} shows $R_T/T$ decreasing monotonically with
epochs, a $36\%$ reduction, as the action deviation
$\epsilon_{\mathrm{emp}}$ falls.
The near-linear $\epsilon_{\mathrm{emp}}$--$R_T/T$ relationship in
Fig.~\ref{fig:noise_soc}(a) is consistent with the training-quality
term of Theorem~\ref{thm:one_step} contracting as $\epsilon \to 0$.

\begin{table}[t]
\centering
\caption{Experiment A: average regret vs training epochs;
         $\epsilon_{\mathrm{emp}}$ is the mean per-step deviation
         from MPC-PWL.}
\label{tab:exp_a}
\begin{tabular}{rrrr}
\toprule
Epochs & $R_T$ (\$) & $R_T/T$ (\$/step) & $\epsilon_{\mathrm{emp}}$ \\
\midrule
 100  & $+4.186$ & $+0.0249$ & $0.816$ \\
 300  & $+3.936$ & $+0.0234$ & $0.791$ \\
 500  & $+3.703$ & $+0.0220$ & $0.775$ \\
 750  & $+3.040$ & $+0.0181$ & $0.718$ \\
1000  & $+2.874$ & $+0.0171$ & $0.710$ \\
1500  & $+2.682$ & $+0.0160$ & $0.693$ \\
\bottomrule
\end{tabular}
\end{table}

\subsection{Experiment B: Forecast Robustness}

To validate the forecast-error term, we inject multiplicative
Gaussian noise ($\sigma \in [0, 0.30]$) into solar and demand
forecasts, averaging $R_T/T$ over 5 seeds.
Table~\ref{tab:exp_b} shows $R_T/T$ nearly flat across the range, a
$3.75\%$ rise in regret for a $30\%$ increase in forecast error.
The receding-horizon policy observes the true SOC $x_k$ and replans
each step, so zero-mean noise is partly absorbed by state feedback;
the sensitivity constant $b$ is thus empirically small and the
forecast term is dominated by the training-quality term
(Fig.~\ref{fig:noise_soc}(b)).

\begin{table}[t]
\centering
\caption{Experiment B: $R_T/T$ vs forecast noise;
         $\bar{e} = \|\xi_{\mathrm{noisy}} - \xi_{\mathrm{true}}\|_2/T$.}
\label{tab:exp_b}
\begin{tabular}{ccc}
\toprule
Noise $\sigma$ & $\bar{e}$ & $R_T/T$ (\$/step) \\
\midrule
0.00 & 0.000 & $+0.0160 \pm 0.0000$ \\
0.05 & 0.030 & $+0.0160 \pm 0.0001$ \\
0.10 & 0.059 & $+0.0161 \pm 0.0002$ \\
0.15 & 0.089 & $+0.0162 \pm 0.0002$ \\
0.20 & 0.119 & $+0.0163 \pm 0.0003$ \\
0.25 & 0.148 & $+0.0164 \pm 0.0003$ \\
0.30 & 0.178 & $+0.0166 \pm 0.0004$ \\
\bottomrule
\end{tabular}
\end{table}

\begin{figure}[t]
  \centering
  \includegraphics[width=0.8\columnwidth]{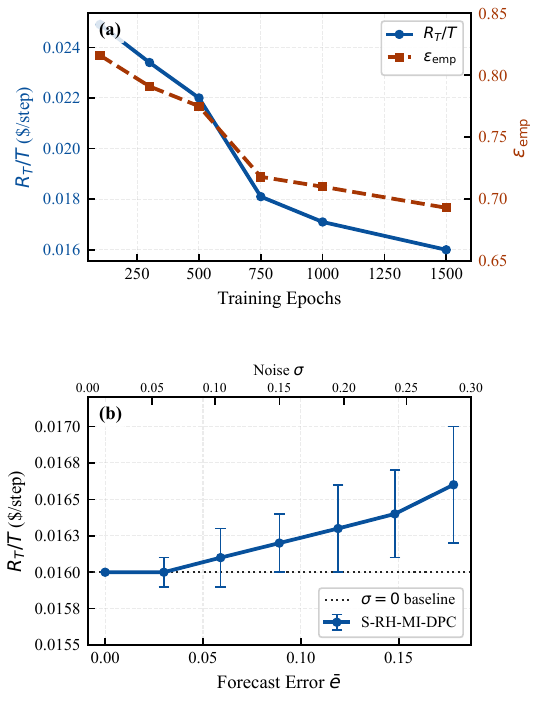}
  \caption{Empirical validation of~\eqref{eq:avg_regret}:
         (a)~regret vs.\ training epochs;
         (b)~regret vs.\ forecast noise.}
  \label{fig:noise_soc}
\end{figure}

\section{Conclusion}
\label{sec:conclusion}

We presented a safe receding-horizon mixed-integer differentiable
predictive control (S-RH-MI-DPC) methodology for battery energy
storage dispatch that replaces the per-step mixed-integer solve with
a pre-trained neural policy and a QP-based safety filter.
The controller is formally defined as a hybrid closed-loop system
with learned mode selection and mode-conditioned projection.
Theoretically, we establish recursive feasibility and positive
invariance of the SOC constraint set independent of network weights
or mode selection, and conditional performance bounds that decompose
suboptimality into training-quality, mode-mismatch, and
forecast-error terms.
These bounds are empirically validated where regret decreases
monotonically with epochs and increases by less than $4\%$ across
$0$--$30\%$ forecast noise.
On a 7-day evaluation, S-RH-MI-DPC achieves a $6.9\%$ cost gap versus
the mixed-integer MPC benchmark with $100\%$ feasibility and a
$25\times$ speedup ($0.11$\,s vs $2.7$\,s per step).
Extensions to multi-battery fleet coordination and real-time ancillary
service markets are planned.
\section*{Acknowledgements}
This research is partially supported by the U.S. DOE, Office of Science, ASCR program under the Scientific Discovery through Advanced Computing (SciDAC) Institute “LEADS: LEarning-Accelerated Domain Science”, and by the Ralph O’Connor Sustainable Energy Institute (ROSEI) at Johns Hopkins University.

\bibliographystyle{IEEEtran}
\bibliography{references}

@article{xu2018_liion_degradation_model,
  title   = {Modeling of Lithium-Ion Battery Degradation for Cell Life Assessment},
  author  = {Xu, Bolun and Oudalov, Alexandre and Ulbig, Andreas and Andersson, G{\"o}ran and Kirschen, Daniel S.},
  journal = {IEEE Transactions on Smart Grid},
  volume  = {9},
  number  = {2},
  pages   = {1131--1140},
  year    = {2018},
  doi     = {10.1109/TSG.2016.2578950},
  issn    = {1949-3053}
}

@ARTICLE{xu2018_cycle_aging_market,
  author={Xu, Bolun and Zhao, Jinye and Zheng, Tongxin and Litvinov, Eugene and Kirschen, Daniel S.},
  journal={IEEE Transactions on Power Systems}, 
  title={Factoring the Cycle Aging Cost of Batteries Participating in Electricity Markets}, 
  year={2018},
  volume={33},
  number={2},
  pages={2248-2259},
  doi={10.1109/TPWRS.2017.2733339}}

@article{chen2022learning,
  author  = {Tianlong Chen and Xiaohan Chen and Wuyang Chen and Howard Heaton and Jialin Liu and Zhangyang Wang and Wotao Yin},
  title   = {Learning to Optimize: A Primer and A Benchmark},
  journal = {Journal of Machine Learning Research},
  year    = {2022},
  volume  = {23},
  number  = {189},
  pages   = {1--59},
}

@article{drgovna2024learning,
  title={Learning constrained parametric differentiable predictive control policies with guarantees},
  author={Drgo{\v{n}}a, J{\'a}n and Tuor, Aaron and Vrabie, Draguna},
  journal={IEEE Transactions on Systems, Man, and Cybernetics: Systems},
  volume={54},
  number={6},
  pages={3596--3607},
  year={2024},
  publisher={IEEE}
}

@article{boldocky2025learning,
  title={Learning to solve parametric mixed-integer optimal control problems via differentiable predictive control},
  author={Boldock{\'y}, J{\'a}n and Javan, Shahriar Dadras and Gulan, Martin and M{\"o}nnigmann, Martin and Drgo{\v{n}}a, J{\'a}n},
  journal={arXiv:2506.19646},
  year={2025}
}

@article{bertsimas2022online_mio,
author = {Bertsimas, Dimitris and Stellato, Bartolomeo},
title = {Online Mixed-Integer Optimization in Milliseconds},
journal = {INFORMS Journal on Computing},
volume = {34},
number = {4},
pages = {2229-2248},
year = {2022},
doi = {10.1287/ijoc.2022.1181},
}

@ARTICLE{Cauligi2022,
  author={Cauligi, Abhishek and Culbertson, Preston and Schmerling, Edward and Schwager, Mac and Stellato, Bartolomeo and Pavone, Marco},
  journal={IEEE Robotics and Automation Letters}, 
  title={CoCo: Online Mixed-Integer Control Via Supervised Learning}, 
  year={2022},
  volume={7},
  number={2},
  pages={1447-1454},
  doi={10.1109/LRA.2021.3135931}}

@inproceedings{endo1968_rainflow,
  title     = {Damage Evaluation of Metals for Random or Varying Loading—Three Aspects of Rain Flow Method},
  author    = {Endo, Tatsuo and Mitsunaga, K. and Takahashi, K. and Kobayashi, K. and Matsuishi, Masanori},
  booktitle = {Proceedings of the 1974 Symposium on Mechanical Behaviour of Materials},
  volume    = {1},
  pages     = {371--380},
  year      = {1974}
}

@article{tang2025l2o_minlp,
  title   = {Learning to Optimize for Mixed-Integer Nonlinear Programming with Feasibility Guarantees},
  author  = {Tang, Bohan and Khalil, Elias B. and Drgo{\v{n}}a, J{\'a}n},
  journal = {arXiv preprint},
  volume  = {arXiv:2410.11061},
  year    = {2025},
  doi     = {10.48550/arXiv.2410.11061},
}

@book{bauschke2017convex,
  title        = {Convex Analysis and Monotone Operator Theory in Hilbert Spaces},
  author       = {Heinz H. Bauschke and Patrick L. Combettes},
  series       = {CMS Books in Mathematics},
  publisher    = {Springer Cham},
  year         = {2017},
  doi          = {10.1007/978-3-319-48311-5},
  isbn         = {978-3-319-48310-8},
  edition      = {2},
  pages        = {XIX, 619}
}

@article{mayne2000constrained,
  title        = {Constrained Model Predictive Control: Stability and Optimality},
  author       = {D.Q. Mayne and J.B. Rawlings and C.V. Rao and P.O.M. Scokaert},
  journal      = {Automatica},
  volume       = {36},
  number       = {6},
  pages        = {789--814},
  year         = {2000},
  doi          = {10.1016/S0005-1098(99)00214-9}
}

@article{karg2020efficient,
  title={Efficient Representation and Approximation of Model Predictive Control Laws via Deep Learning},
  author={Karg, Benjamin and Lucia, Sergio},
  journal={IEEE Transactions on Cybernetics},
  volume={50},
  number={9},
  pages={3866--3878},
  year={2020},
  doi={10.1109/TCYB.2020.2999556}
}

@inproceedings{lin2022boundedregretmpcperturbationanalysis,
 author = {Lin, Yiheng and Hu, Yang and Qu, Guannan and Li, Tongxin and Wierman, Adam},
 booktitle = {Advances in Neural Information Processing Systems},
 doi = {10.52202/068431-2621},
 pages = {36174--36187},
 title = {Bounded-Regret MPC via Perturbation Analysis: Prediction Error, Constraints, and Nonlinearity},
 volume = {35},
 year = {2022}
}

@article{hewing2020learning,
  title   = {Learning-Based Model Predictive Control: Toward Safe Learning in Control},
  author  = {Hewing, Lukas and Wabersich, Kim P. and Menner, Marcel and Zeilinger, Melanie N.},
  journal = {Annual Review of Control, Robotics, and Autonomous Systems},
  volume  = {3},
  pages   = {269--296},
  year    = {2020},
  doi     = {10.1146/annurev-control-090419-075625},
  issn    = {2573-5144}
}

@article{Eshagh2026midpc,
  title   = {End-to-End Battery Dispatch with Exact Rainflow Degradation via Mixed-Integer Differentiable Predictive Control},
  author  = {Eshagh Safarzadeh Ravajiri and Jan Drgona and Mahdi Mehrtash and Benjamin F. Hobbs},
  journal = {arXiv preprint},
  volume  = {arXiv:2609.12968},
  year    = {2026},
  doi     = {10.48550/arXiv.2609.12968},
}

@ARTICLE{11651352,
  author={Zheng, Honghui and Favaro, Pietro and Dvorkin, Yury and Drgona, Ján},
  journal={IEEE Transactions on Sustainable Energy}, 
  title={Accelerating Underground Pumped Hydro Energy Storage Scheduling with Decision-Focused Learning}, 
  year={2026},
  volume={},
  number={},
  pages={1-13},
  doi={10.1109/TSTE.2026.3722492}}

\end{document}